\newtheorem{theorem}{Theorem}
\newtheorem{definition}{Definition}
\begin{document}
%
\title{Data Privacy and Utility Trade-Off Based on Mutual Information Neural Estimator}
%
%
%

\author{Qihong~Wu, Jinchuan~Tang,~\IEEEmembership{Member,~IEEE,}
	Shuping Dang,~\IEEEmembership{Member,~IEEE,}  Gaojie~Chen,~\IEEEmembership{Senior~Member,~IEEE}
\thanks{This work was supported by Guizhou University through ``The Secure Encryption Mechanisms of Spatially Embedded Networks'' under Grant No. 702957213301.}
\thanks{Q. Wu and J. Tang is with the School of Computer Science and Technology, Guizhou University, Guiyang, P.R. China e-mail: \{gs.qhwu, jctang\}@gzu.edu.cn. J. Tang is the corresponding author.} 
\thanks{S. Dang is with the Department of Electrical \& Electronic Engineering, University of Bristol, Bristol, UK e-mail: shuping.dang@bristol.ac.uk.}
\thanks{G. Chen is with the Department of Engineering, University of Leicester, Leicester, UK e-mail: gaojie.chen@leicester.ac.uk.}
}
\maketitle

\begin{abstract}
In the era of big data and the Internet of Things (IoT), data owners need to share a large amount of data with the intended receivers in an insecure environment, posing a trade-off issue between user privacy and data utility. The privacy utility trade-off was facilitated through a privacy funnel based on mutual information. Nevertheless, it is challenging to characterize the mutual information accurately with small sample size or unknown distribution functions. In this article, we propose a privacy funnel based on mutual information neural estimator (MINE) to optimize the privacy utility trade-off by estimating mutual information. Instead of computing mutual information in traditional way, we estimate it using an MINE, which obtains the estimated mutual information in a trained way, ensuring that the estimation results are as precise as possible. We employ estimated mutual information as a measure of privacy and utility, and then form a problem to optimize data utility by training a neural network while the estimator's privacy discourse is less than a threshold. The simulation results also demonstrated that the estimated mutual information from MINE works very well to approximate the mutual information even with a limited number of samples to quantify privacy leakage and data utility retention, as well as optimize the privacy utility trade-off.
\end{abstract}

\begin{IEEEkeywords}
Privacy utility trade-off, mutual information estimator, KL-divergence, neural networks.
\end{IEEEkeywords}

%
\IEEEpeerreviewmaketitle

\section{Introduction}
%
%
%
%
\IEEEPARstart{D}{ata} privacy involves publishing data efficiently to minimize risk and protect sensitive data. The content of data and its associated metadata information can be exploited to infer sensitive and personally identifiable information, which can be harmful to individuals and organizations. The availability of data publishing are credited with advancing solutions to complex problems in data sharing, data acquisition, among others in the era of big data and IoT \cite{8462776, article1}. However, big data publishing comes with a massive privacy disclosure where explosive sensitive data growth has been witnessed. The authors of \cite{1067} have already demonstrated the impact of individual data loss by using examples of privacy disclosure and listed the complexity of tackling technological and legislative challenges for big data and individual privacy in the age of IoT.

Usually, the approaches to keep certain information private can be achieved by distorting the information while disclosing relevant information \cite{1056749,10.1145/342009.335438,dwork2006calibrating,10.1007/11787006_1,6482222}. The authors in \cite{971193, 10.1142/S0218488502001648} established \textit{k}-anonymity  as the characteristic that each record is indistinguishable from at least $k-1$ other records on the quasi-identification to prevent the identity of the owner of public data from being revealed. Generalization and suppression were utilized to obtain \textit{k}-anonymity. Although \textit{k}-anonymity overcomes the problem of identity disclosure, it does not preclude attribute disclosure or homogeneity attacks. Individuals may be exposed to the relationship between identities and sensitive attributes, which may jeopardize the distribution of individuals and the entire dataset. To address this issue, the authors of \cite{10.1145/1217299.1217302} introduced $l$-diversity, which requires sensitive characteristics to have at least $l$ well-represented values in each equivalence class to withstand the homogeneity attack in $k$-anonymity. As noted in \cite{10.1145/1217299.1217302}, $l$-diversity is faced with two major attacks: the first is similarity attack, which ignores the danger posed by semantic relationships between attributes; the second is skewness attack, in which the adversary might deduce sensitive information based on the distribution of sensitive qualities, which is a serious invasion of privacy. As a result, preventing property disclosure issues is insufficient. To this end, the $t$-closeness approach was introduced in \cite{4221659} where the earth mover's distance was used to compute the $t$-closeness. But it has drawbacks such as large loss of data utility and inability to discriminate semantic information. 
The term \textit{utility} refers to certain system properties and intelligibility which represents the amount of useful information that can be extracted from the protected data. 

In an attempt to provide better data utility, the authors of \cite{10.1007/s00778-014-0351-4} developed a model which combines $k$-anonymity and differential privacy, where the latter is capable of answering queries to statistical databases with provable privacy guarantees by minimizing attackers' opportunities to identify database records \cite{dwork2006calibrating, 10.1007/11787006_1}. The requirement that information is disseminated only to a limited extent while the data still meets the usability of certain desired recipients has becomes critical nowadays as a large amount of personal information is being widely disseminated, shared and openly accessible by anyone in big data and IoT applications. Since distorting data too much will destroy the value of data to the desired receipents while distorting too little will help the adversaries to deduce the sensitive information of targeted individual, finding an acceptable privacy utility trade-off between privacy protection and data utility is the key in the development and deployment of privacy protection methods. To make things work, the measurements and characterizations on both the privacy and utility have to be achieved. 

The privacy protection, data uitility as well as their trade-off can be characterized mathematically with the help of information theory. In \cite{1056749, 6482222}, the authors consider expected distortion as a measure of privacy and utility where the collective privacy of all or subsets of database items are obtained to offer a progressive conclusions of the fuzzy region of rate distortion as the number of data samples rises arbitrarily. To simulate privacy leakage and data distortion, the authors of \cite{6970882} proposed a concept called the ``privacy funnel" to characterize the trade-off between data privacy disclosure and utility. In simple terms, the privacy funnel used mutual information to measure both data distortion and privacy disclosure. However, their proposed greedy algorithm used merged elements in the context of unknown distribution of data sets to approximate mutual information which runs fast at the expense of accuracy. Precise calculations on mutual information only apply to discrete variables because the sum can be precisely calculated, or to finite problems where the probability distribution is known. For more general problems, this is impossible. 
Furthermore, to process complex data when the data distribution is unavailable, the authors of \cite{DBLP:journals/corr/ZhangOSO17} added differential privacy guarantee noise to the data compressed by the autoencoder to resist inference attacks in motion-aware applications. Although differential privacy provided strong protection for each identified entry, it failed to separate the information associated with sensitive data from the information associated with non-sensitive data. Thus, the increased noise level seriously compromised data utility. Meanwhile, the authors of \cite{DBLP:conf/aaai/XuCSMS17} proposed a model where the private information is equal to zero when the first stage of the expected information predictor is a linear operator. Although the model achieves a perfect utility competitive trade-off when utility and private data are orthogonal, it requires both an understanding of the model used to handle downstream utility tasks  and the first part of the model must be linear. However, in data publishing, it is usually impossible to obtain the information about such downstream task models. This encourages us to estimate mutual information using a data-driven approach and optimize privacy funnel using mutual information as a measure of privacy and utility.

Recently, a mutual information neural estimator (MINE) was proposed in \cite{Belghazi2018MutualIN} to obtain the approximated mutual information from data samples. Unlike traditional non-universal methods based on the partitioning of probability space \cite{Fraser1986IndependentCF, 761290}, \textit{k}-nearest neighbor statistics \cite{PhysRevE.69.066138, 8294268}, maximum likelihood estimation \cite{DBLP:journals/jmlr/SuzukiSSK08} and variational lower bound \cite{Barber2003TheIA}, MINE does not require the data distribution that is difficult to obtain in practice and it can also match the sample size and dimension better. The simulation results show that such a method can estimate mutual information through neural network training in the case of unknown data distribution, and it is almost indistinguishable from real mutual information. The authors of \cite{9022174} applied MINE to train hidden layer with information bottleneck loss, and froze it before moving onto the next layer. The authors of \cite{Qian2021LearningRV} relaxed the Gaussian posterior assumption by using the MINE \cite{Belghazi2018MutualIN} to train a variational information bottleneck to improves its robustness to adversarial perturbations. The work in \cite{hjelm2019learning} investigated unsupervised learning of representations by maximizing mutual information based on MINE between an input and the output of a deep neural network encoder, and MINE has been shown once again to estimate mutual information more precisely than traditional methods. This inspires us to use mutual information as a measure of both privacy and utility, which will be formulated as optimization problems for privacy utility trade-off.

In this paper, mutual information is adopted to measure both privacy discourse and data utility at the same time. 
Different from previous works which used merged elements to approach data distribution for privacy and utility characterizations, this paper uses MINE \cite{Belghazi2018MutualIN} instead for accuracy. Then, by interval training mutual information estimator and encoder to maximize mutual information, we put mutual information into the privacy utility trade-off framework to obtain the privacy utility model using only single mutual information to measure data privacy and utility, making the model more flexible. Simulations demonstrate that the model's practicality and assess the impact of various noise levels and privacy precisions on the maximum data utility. The simple model allows us to take different parameters on the model into consideration, which includes the influence of different noise levels on data utility and the influence of privacy accuracy on the overall model.

This contribution can be summarized as follows:
\begin{itemize}
	\item The mutual information estimator is used to safeguard data privacy when it is published. It overcomes the problem of unknown distributions being unable to extract mutual information and maximizes data utility under the assumption of fixed privacy budget.
	\item Unlike other measuring methods, the privacy utility trade-off which we proposed relies solely on mutual information. MINE is used throughout this work to estimate mutual information for a better accuracy. And the proposed trade-off framework is straightforward and simple to implement. 
	\item We perform the simulation to find the impacts of various parameters such as noise levels and privacy budgets on the privacy utility trade-off framework and offer some useful insights on choosing the parameters.
\end{itemize}

The notation is given in Table \ref{Table.1}. The remainder of the paper is laid out as follows. In Section \ref{Sec.2}, we describe the system model. Then we formulate the privacy utility trade-off objective using neural estimator in Section \ref{Sec.3}, and Section \ref{Sec.4} is extensive simulation part, finally we summarize our comments in Section \ref{Sec.5}.

\begin{table}
	\centering
	\caption{Notations}
	\def\arraystretch{1.5}
	\begin{tabular}{ c c c c c c c c c c c c c c c c c c c c c c }
		\hline
		Symbol & Meaning  \\ 
		\hline
		$C(S,q)$ & 
		Cost function\\
		$d(X,Y)$ &	 Distortion measure\\
		$D$ &	  Distortion level\\
		$D_{KL} (\cdot||\cdot)$ &	 KL-divergence of two distributions\\
		$E_{P_{X,Y}}[d(X,Y)]$ & 	 Expected distortion\\
		$\hat{G}$ & Estimated gradient \\
		$H(\cdot)$ &  Information entropy\\
		$\hat{H}_{\theta}(\cdot)$ &  Estimated cross-entropy with neural network $\theta$\\
		$I(\cdot;\cdot)$ &  Mutual information of two random variables\\
		$\hat{I}_\theta (\cdot;\cdot)$ &	Estimated mutual information with neural network $\theta$ \\
		$k$ & Minibatch size\\
		$n$ &  Number of samples from $Y$\\
		$p(\cdot,\cdot)$ &  Joint probability density of two random variables\\
		$p(\cdot)$ &  Marginal probability density\\
		$\hat{P}^{\left(n\right)}$  & \begin{tabular}[c]{@{}l@{}}
			Empirical distribution associated to $n$ i.i.d. \\ samples for a given distribution $P$\\
		\end{tabular}\\
	   $P(Y|X)$ &	Conditional distribution, privacy mapping\\
	   $q$	&  Belief distribution\\
	   $q_0^*$ &	 Optimal $q$ before observation of data\\
	   $q_y^*$ & 	After observing published data\\
		$S$ &  Sensitive dataset  \\
		$T_{\epsilon}(X;Y)$ & 	 Characterization of privacy utility trade-off\\
		$X$ &  Dataset related to S\\  
		$Y$ & Published dataset \\
		$\Delta C$ &	Expected inference cost gain\\
		$\epsilon$ & 	Privacy parameter, privacy budget\\
		$\eta$ & Learning rate\\
		$\theta$ &	 Deep neural network parameter\\
		$\phi$ & 	 Encoder weights\\ 
		\hline
		\label{Table.1}
	\end{tabular}
\end{table}

\section{System Model}
\label{Sec.2}
As shown in Fig. \ref{Fig.1}, we consider a case where a data publisher has some sensitive data set $S \in \mathcal{S}$ which is correlated with some non-sensitive data set $X \in \mathcal{X}$, and the data publisher wishes to share $X$ with data receiver. This correlation can be regarded as an auxiliary information to infer the sensitive dataset $S$. To reduce the disclosure of sensitive dataset $S$, the data publisher send a perturbed version of $X$ denoted by $Y \in \mathcal{Y}$, where $Y$ is the published dataset generated by adding Gaussian or Laplacian noise to $X$. Dataset $Y$ that passes through the privacy mechanism should be as irrelevant as possible to sensitive data $S$, while retaining as much information as possible about $X$, since $Y$ will be processed by the data receiver to provide utility. We assume that $S, X$ and $Y$ follow a Markov chain $S \rightarrow X \rightarrow Y$. 

\begin{figure}[t]
	\centering
	\includegraphics[width=0.5\textwidth]{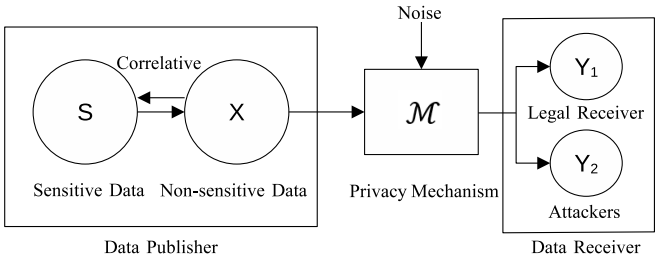}
	\caption{Privacy mechanism model.}
	\label{Fig.1}	
\end{figure}	

Under the assumptions above, we propose to design a privacy framework which maps $X$ to a random variable $Y$ so that the information leakage from $S$ to $Y$ is limited while the data utility of $X$ given $Y$ is maximum. Our challenge, therefore, is to use a data-driven approach to find the privacy utility trade-off for both discrete and continuous data. To fulfil such vision, unlike \cite{1056749}, which used a greedy algorithm with merged elements, we propose to use the mutual information nerual network estimator to obtain the more accurate metrics of information leakage from $S$ to $Y$ as well as the utility of $X$ given $Y$. The combined results on privacy and utility measures will be used in the proposed privacy utility trade-off framework to optimize utility while safeguarding privacy.

\section{Privacy utility trade-off}
\label{Sec.3}

In this section, we describe the privacy and utility metrics in data publishing separately first. Then, we introduce the estimator for mutual information. Finally, we discuss the privacy utility trade-off problem which firstly adopts mutual information neural estimator as a measure of both privacy and utility.

\subsection{Privacy and Utility Metric}
Here, we consider the inference threat model introduced in \cite{DBLP:conf/allerton/CalmonF12}. At this moment, illegal data receiver as an attacker tries to infer sensitive information $S$. More precisely, the data receiver chooses a belief distribution $q$ from the set $P_S$ of all probability distributions over $S$ to minimize the expected inference cost function $C(S,q)$. Under the logarithmic loss $C(S,q)=-\log_2 q(S)$, the optimal $q$ before observation of data $Y$ is
\begin{equation}
	\begin{aligned}
		q_0^* &= \mathop{\arg \min} \limits_{q \in P_S}E_{P_S}[C(S,q)]\\
		&=\mathop{\arg \min} \limits_{q \in P_S}E_{P_S}[-\log_2 q(S)],
	\end{aligned}
\end{equation}
and the Shannon entropy of $S$ is given by
\begin{equation}
	\begin{aligned}
		H(S) &=E_{P_S}[-\log_2 q_0^*(S)].\\
	\end{aligned}
\end{equation}
After observing published data $y \in Y$, we get
\begin{equation}
	\begin{aligned}
 		q_y^* &= \mathop{\arg \min} \limits_{q \in P_S}E_{P_{S|Y}}[C(S,q)|Y=y]\\
		&=\mathop{\arg \min} \limits_{q \in P_S}E_{P_{S|Y}}[-\log_2 q(S|Y)],\\
	\end{aligned}
\end{equation}
and the conditional entropy of $S$ given $Y$ is written as
\begin{equation}
	\begin{aligned}
		H(S|Y) &=E_{P_{S|Y}}[-\log_2 q_y^*(S|Y)].
	\end{aligned}
\end{equation}

Because publishing data $Y$ is observable, so the data receiver obtains an expected gain in inference cost of \begin{equation}
	\begin{aligned}
\Delta C=H(S)-H(S|Y)=I(S;Y),
	\end{aligned}
\end{equation}
where the expected inference cost gain $\Delta C$ measures the improvement in the inference quality from sensitive data $S$ after the observed $Y$. The design goal of the privacy utility trade-off should aim at reducing $\Delta C$ to bring the inference with observing $Y$ closer to the initial inference without observing $Y$, the formula can be written as
\begin{equation}
	\begin{aligned}
	\min \Delta C &= \min I(S;Y)\\
	&=\min \int_{S \times Y}p(s,y)\log_2{\frac{p(s,y)}{{p(s)p(y)}} }{\rm d}s\,{\rm d}y.
	\label{eq_delta}
\end{aligned}
\end{equation}
where $p(s,y)$ is the joint probability density of $S$ and $Y$, $p(s)$ and $p(y)$ are the corresponding marginal probability density.

Protecting the privacy of data publishing is considered essential, at the same time, the framework of privacy utility trade-off should maintain the utility of the perturbed data $Y$. In \cite{book}, the key issue is that utility is determined not only by the data modifications made, but also by the anticipated data uses. Since potential data uses are truly diverse and it may even be hard to identify them all now of the data release, privacy protection can seldom be performed in a data use-specific manner. As a result, it is more common to speak about information loss rather than utility. Measures of information loss provide a basic approach for the data protector to determine how much harm a particular masking technique is causing to the data.

As the discrimination between two distribution of mean information, KL-divergence can be used to measure information loss, which formula is as \eqref{Eq.2}. Meanwhile, we know that KL-divergence is equal to mutual information by one form as \eqref{Eq.3}. Consequently, we can measure data utility by mutual information.

\subsection{Mutual Information Neural Estimator (MINE)}
The theoretical mutual information defined above depends on probability density functions which are difficult to obtain in practice. Here, we focus on an estimator named MINE for its good performance in approximating the mutual information from data samples with high accuracy \cite{Belghazi2018MutualIN}. It utilizes the Donsker-Varadhan representation (DV-representation) of the Kullback-Leibler divergence (KL-divergence), which is related to the mutual information. In MINE, mutual information is estimated by parameterizing the lower bound of KL-divergence and improving the lower bound by continuous training.

The KL-divergence is a measurement of the dissimilarity between distributions $P$ and $Q$, which can be written as
\begin{equation}
	D_{KL}(P||Q) = \int{P(x)\log_2{\frac{P(x)}{Q(x)}}}{\rm d}x,
	\label{Eq.2}
\end{equation}
where $D_{KL}(P||Q)\lg 0$, with $D_{KL}(P||Q) = 0$ if and only if $P(x) = Q(x)$. 
Consequently, we get
\begin{equation}
	\begin{aligned}
		I(S;Y)&:=\int_{S\times Y} p(s,y)\log_2{\frac{p(s,y)}{{p(s)p(y)}} }{\rm d}s\, {\rm d}y \\
		&=D_{KL} (p(s,y)||p(s)p(y))\\
		&=E_{p(S,Y)}\left[\log_2 \left(\frac{p(s,y)}{p(s)p(y)}\right)\right],
		\label{Eq.3}
	\end{aligned}
\end{equation}
where, given samples of $S$ and $Y$, we can estimate $I(S;Y)$ at the cost of accuracy when the number of samples disobey the Law of Large Numbers. In this work, we pay attention to DV-representation \cite{https://doi.org/10.1002/cpa.3160360204}, which could result in a more accurate estimator.
\begin{theorem}
	(DV-representation). The KL-divergence admits representation given by 	\begin{equation}
		D_{KL}(P||Q) = \sup_{T: \Omega \rightarrow R} E_P [T] - \log_2 (E_Q[e^T]),
	\end{equation}
where the supremum is taken over all functions $T$ such that the two expectations are finite.
\end{theorem}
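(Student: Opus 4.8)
The plan is to prove the two directions of the identity separately: an inequality $E_P[T] - \log(E_Q[e^{T}]) \le D_{KL}(P||Q)$ valid for every admissible $T$, and then the attainment of this bound, which together pin the supremum to $D_{KL}(P||Q)$. Throughout I take the logarithm in the base matching the exponential $e^{T}$ so that the representation is internally consistent.

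For the inequality, I would associate to each fixed $T$ a tilted (Gibbs) probability measure $G$ specified through its density against $Q$,
\[ \frac{dG}{dQ} = \frac{e^{T}}{E_Q[e^{T}]}. \]
This is a legitimate probability measure precisely because the admissibility hypothesis makes $E_Q[e^{T}]$ finite and strictly positive. Applying the chain rule $\frac{dP}{dG} = \frac{dP}{dQ}\cdot\frac{E_Q[e^{T}]}{e^{T}}$, taking logarithms, and integrating against $P$ yields the exact decomposition
\[ D_{KL}(P||G) = D_{KL}(P||Q) - E_P[T] + \log\!\big(E_Q[e^{T}]\big). \]
Since KL-divergence is nonnegative, as recorded just after \eqref{Eq.2}, the left-hand side is at least zero, and rearranging gives the claimed upper bound for every such $T$.

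To show the bound is tight I would exhibit the optimizer $T^{\ast} = \log\frac{dP}{dQ}$. A direct computation gives $E_P[T^{\ast}] = D_{KL}(P||Q)$ and $E_Q[e^{T^{\ast}}] = \int \frac{dP}{dQ}\,dQ = 1$, so $\log(E_Q[e^{T^{\ast}}]) = 0$ and the functional equals $D_{KL}(P||Q)$ exactly; equivalently, this choice makes $G$ coincide with $P$ and forces $D_{KL}(P||G) = 0$. Because the functional is invariant under $T \mapsto T + c$, the maximizer is pinned down only up to an additive constant, and the supremum is genuinely attained whenever $D_{KL}(P||Q)$ is finite.

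The main obstacle is the degenerate case in which $P$ is not absolutely continuous with respect to $Q$, so $\frac{dP}{dQ}$ does not exist and $D_{KL}(P||Q) = +\infty$; here the optimizer above is unavailable and I would instead drive the functional to $+\infty$ by hand. Choosing a set $A$ with $Q(A) = 0 < P(A)$ and taking $T_n = n\,\mathbf{1}_A$ keeps $E_Q[e^{T_n}] = 1$ while $E_P[T_n] = nP(A) \to \infty$, so the supremum again matches the value of the divergence. Carrying the argument out in full measure-theoretic generality — justifying the existence of the Radon-Nikodym derivative on the absolutely continuous part, verifying the integrability required for $T^{\ast}$ to be admissible, and confirming the limiting behaviour of the truncations — is the only delicate part; the core estimate itself is an immediate consequence of the nonnegativity of the KL-divergence applied to the tilted measure $G$.
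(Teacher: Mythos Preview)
Your argument for the inequality direction is essentially identical to the paper's: both introduce the Gibbs distribution $dG/dQ = e^{T}/E_Q[e^{T}]$, establish the identity $D_{KL}(P||Q) - \big(E_P[T] - \log E_Q[e^{T}]\big) = D_{KL}(P||G)$, and invoke the non-negativity of the KL-divergence. You go beyond the paper by explicitly exhibiting the optimizer $T^{\ast} = \log(dP/dQ)$ and by treating the singular case $D_{KL}(P||Q)=+\infty$; the paper's appendix in fact proves only the inequality \eqref{Eq.11} and leaves the attainment of the supremum implicit.
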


\begin{proof}
	See Appendix. 
\end{proof}

Let $\mathcal{F}$ be any class of functions $T:\Omega \rightarrow R$ satisfying the integrability constraints of the theorem, we then have a lower-bound given by
\begin{equation}
D_{KL}(P||Q) \ge \sup_{T\in \mathcal{F}} E_P [T] - \log_2 (E_Q[e^T]).
	\label{Eq.11}
\end{equation}

For an MINE, the approach here is to choose $\mathcal{F}$ to be the family of functions $T_\theta: \mathcal{S} \times \mathcal{Y} \rightarrow\mathbb{R} $ parametrized by a deep neural network with parameter $\theta \in \Theta$ so that
\begin{equation}
	I(S;Y) \ge I_\Theta (S;Y) = \sup_{\theta \in \Theta} E_{P_{SY}}[T_\theta]-\log_2 \left(E_{P_S P_Y}[e^{T_\theta}]\right),
\end{equation}
the expectations above are estimated using empirical samples from $P_{SY}$ and $P_S \times P_Y$ or by shuffling the samples from the joint distribution along the batch axis, which is defined by Definition \ref{def1}. As a result, the MINE algorithm to train the estimator is given by Algorithm \ref{alg1}, and the structure of a four-layer structured of neural network, as shown in Fig. \ref{Fig.2}, is adopted in this work. 
\begin{definition} \label{def1}
	(Mutual Information Neural Estimator (MINE)). Let $ \mathcal{F} = \left\{T_\theta\right\}_{\theta \in \Theta} $ the set of functions parametrized by a neural network. MINE is defined as \cite{Belghazi2018MutualIN}	
	\begin{small}
		\begin{equation}
			{\hat{I}(S;Y)_n}=\sup_{\theta \in \Theta}{E_{P_{SY}^{ \left ( n\right )}}}[T_\theta]-\log_2\left({E_{P_S^{\left(n\right)}\hat{P}_Y^{\left(n\right)}}}[e^{T_\theta}]\right),
		\end{equation}
	\end{small}
where $n$ is the number of samples, $\hat{P}^{\left(n\right)} $ is the empirical distribution associated to $n$ independent and identically distributed (i.i.d.) samples by given distributions $P_{SY}, P_{S}, P_{Y}$. 	
\end{definition}

\begin{figure}[t]
	\centering
	\includegraphics[width=0.5\textwidth]{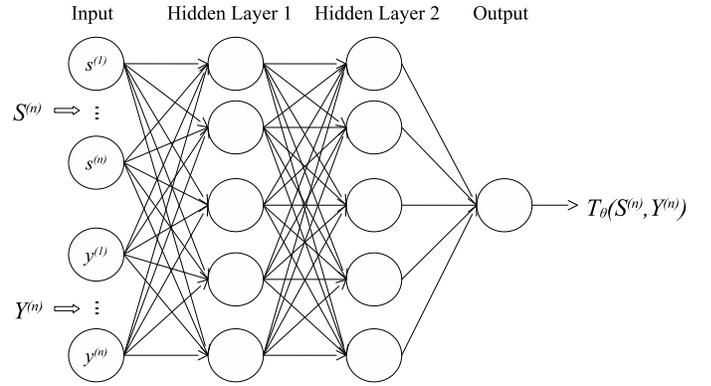}
	\caption{A four-layer neural network approximated $T_\theta$.}
	\label{Fig.2}	
\end{figure}

\begin{algorithm}[t]
	\renewcommand{\algorithmicrequire}{\textbf{Input:}}
	\renewcommand{\algorithmicensure}{\textbf{Output:}}
	\caption{Mutual information neural estimator \cite{Belghazi2018MutualIN}}
	\label{alg1}
	\begin{algorithmic}[1]
		\REQUIRE $\left(s^{\left(1\right)}, y^{\left(1\right)}\right), ..., \left(s^{\left(k\right)}, y^{\left(k\right)}\right)$, $k$, $\eta$, where $(s,y)$ is a given dataset, $k$ is minibatch size, and $\eta$ is learning rate.
		\ENSURE $\hat{I}_\theta(S;Y)$
		\REPEAT
		\STATE Draw $k$ minibatch samples from the joint distribution: $\left(s^{\left(1\right)}, y^{\left(1\right)}\right), ..., \left(s^{\left(k\right)}, y^{\left(k\right)}\right)\sim P_{SY} $
		\STATE Draw $n$ samples from the $Y$ marginal distribution: $\overline{y}^{\left(1\right)}, ..., \overline{y}^{\left(n\right)} \sim P_Y$
		\STATE Evaluate the mutual information lower-bound:
			\begin{align*}
				\hat{I}_\theta(S;Y) \leftarrow & {\frac{1}{k}{\sum_{i=1}^{k}T_\theta {\left(s^{\left(i\right)},y^{\left(i\right)}\right)}}}\\&-\log_2{\left(\frac{1}{k}\sum_{i=1}^{k}e^{T_\theta {\left(s^{\left(i\right)},y^{\left(i\right)}\right)}}\right)}
			\end{align*}
		\STATE Evaluate bias corrected gradients with moving average: $\hat{G}(\theta) \leftarrow \widetilde{\nabla}_\theta \,\hat{I}_\theta(S;Y)$
		\STATE Update the model parameters: $\theta \leftarrow \theta + \eta \hat{G}(\theta)$
		\UNTIL convergence
	\end{algorithmic}  
\end{algorithm}


\subsection{Privacy Utility Trade-off}
To model the privacy disclosure and data utility, the authors of \cite{6970882} proposed a concept called the “privacy funnel" to represent the trade-off between data utility and user privacy. Briefly, the privacy funnel uses mutual information to evaluate both data utility and privacy disclosure. And the optimization problem of privacy utility trade-off for a given distortion level $D$ is characterized as follow:
\begin{equation}
	\min_{P_{Y|X}: E_{P_{X,Y}}[d(X,Y)]\le D}\Delta C,
	\label{12}
\end{equation}
where $P_{Y|X}$ is privacy mapping, $d(X,Y)$ is the distortion measure, $E_{P_{X,Y}}[d(X,Y)]$ represents the expected distortion, $D$ represents the level of distortion, $\Delta C$ describes expected inference cost gain. Since the data distribution is often inaccessible and only a proportion of the sampled data is available, we propose to use MINE to obtain estimated mutual information as a measure of both privacy and utility. 

The dual form of \eqref{12} is given by
\begin{equation}
	T_\epsilon(X;Y):=\max_{\substack{P_{Y|X}:\hat{I}(S;Y) \le \epsilon\\S \rightarrow X \rightarrow Y}} \hat{I}(X;Y),
	\label{Eq.13}
\end{equation}
where $T_\epsilon (X;Y)$ is characterization of privacy utility trade-off; $P_{Y|X}$ represents privacy mapping; $\hat{I}(X;Y)$ and $\hat{I}(S;Y)$ is the estimated mutual information of $I(X;Y)$ and $I(S;Y)$, respectively; $\epsilon$ stands for privacy budget.

The privacy utility trade-off with a privacy budget can be used as a solution to the optimization problem of \eqref{Eq.13}. The remained task is to design a trade-off algorithm. From a communication theoretic perspective, privacy mechanism is similar to a noisy channel, and the optimal transmission rate can be regarded as a function of the mutual information $I(X;Y)$ between input $X$ and output $Y$ of a channel $P_{Y|X}$. However, the mutual information also depends on the channel probability distribution, i.e. added noise. Rather than approximating the channel probability distribution itself, we will approximate the mutual information $I(X;Y)$ between the samples of the channel input and output. Then we optimize the mutual information due to added noise so that the adversaries will be diffcult to infer the sensitive data, which can be regarded as an encoding process, and also optimize the infering capability of legal receiver by minimizing cross-entropy mutual information between $Y$ and $X$ so that the data utility is guaranteed, which can be regarded as a decoding process. Thus, we give our optimization framework as Algorithm \ref{alg2}.

\begin{algorithm}[h]\label{alg2}
	\renewcommand{\algorithmicrequire}{\textbf{Input:}}
	\renewcommand{\algorithmicensure}{\textbf{Output:}}
	\caption{Privacy Utility Trade-off}
	\label{alg2}
	\begin{algorithmic}[1]
		\REQUIRE $S, x, y, n_0, \epsilon, \eta, \theta_1, \theta_2$ and $\theta_3$, where $S$ is sensitive datasets, $x,y$ are data samples, $n_0$ denotes additive nosie, $\epsilon$ is privacy budget; $\eta$ is learning rate, $\theta_1$ is the network for estimation training process, $\theta_2$ is the network for an encoder tuned by $\theta_1$, and $\theta_3$ is the decoding network.
		\ENSURE $\hat{I}_{\theta_1}(X;Y), \hat{I}_{\theta_2}(X;Y), \hat{H}_{\theta_3}(x), \hat{x}$
		\STATE $y \leftarrow x + n_0$
		\STATE Calculate the approximated $I(S;Y)$ with Algorithm \ref{alg1}
		\WHILE {$\hat{I}(S;Y) \le \epsilon$}
		\STATE \textbf{Encoding training process}:
		\REPEAT
		\STATE Evaluate the estimated mutual information lower-bound: 
				\begin{align*}
				\hat{I}_{\theta_1}(X;Y) \leftarrow & {\frac{1}{k}{\sum_{i=1}^{k}T_{\theta_1} {\left(x^{\left(i\right)},y^{\left(i\right)}\right)}}}\\&-\log_2{\left(\frac{1}{k}\sum_{i=1}^{k}e^{T_{\theta_1} {\left(x^{\left(i\right)},y^{\left(i\right)}\right)}}\right)}
			\end{align*}
		
		\STATE To maximize $\hat{I}_{\theta_1}(X_{\theta_2};Y) $ in the encoding process,  gradients:\\
		$\hat{G}_{xy}(\theta_1) \leftarrow \widetilde{\nabla}_{\theta_1} \,\hat{I}_{\theta_1}(X_{\theta_2};Y)$\\
		$\hat{G}_{xy}(\theta_2) \leftarrow \widetilde{\nabla}_{\theta_2} \,\hat{I}_{\theta_1}(X_{\theta_2};Y)$,\\ 
		where $X_{\theta_2}$ is the output of neural network  $\theta_2$.  
		\STATE Update the model parameters:\\
		$\theta_1 \leftarrow \theta_1 + \eta \hat{G}(\theta_1)$\\
		$\theta_2 \leftarrow \theta_2 + \eta \hat{G}(\theta_2)$
		\UNTIL convergence\\
		\STATE \textbf{Decoder training process}:
		\REPEAT 
		\STATE Evaluate the cross-entropy of estimated $\hat {x}$: 
		$$\hat{H}_{\theta_3}(X, \hat{X}) \leftarrow -\frac{1}{k}{\sum_{i=1}^{k}T_{\theta_3} {\left(x^{\left(i\right)}\right)}}\log_2 T_{\theta_3}{\left(x^{\left(i\right)}\right)}$$
		\STATE Evaluate bias corrected gradients with moving average:  $\hat{G}_x(\theta_3) \leftarrow \widetilde{\nabla}_{\theta_3} \hat{H}_{\theta_3}(X)$
		\STATE Update the model parameters: $\theta_3 \leftarrow \theta_3 + \eta \hat{G}(\theta_3)$
		\UNTIL convergence
		\ENDWHILE
	\end{algorithmic}  
\end{algorithm}

\section{Simulation Results}
\label{Sec.4}
In this section, we conduct simulations to verify and evaluate the performance of our proposed model. We first compare true mutual information with estimated mutual information based on MINE for a known data distribution. Then, the proposed privacy utility trade-off framework is assessed from a variety of angles, including the influence of the privacy utility trade-off under various noise levels and privacy budget $\epsilon$. Finally, the effect of noise parameters is taken into account.

\subsection{The Accuracy of the Estimated Mutual Information}
We compare the mutual information with and without MINE. The estimated results are obtained by Algorithm \ref{alg1}, and the true MI is calculated using a mutual information formula. In the simulations, $S$ is a dataset of random numbers that follow a standard normal distribution. $S$ and $Y$ follow a joint normal distribution with zero mean and a variance $0.2$. The true mutual information with \eqref{Eq.3} is $0.6586$, and the maximum epochs is $500$, then we import the data into MINE for iterative estimation. The parameters of the neural network are described in Table \ref{Table.2}. As shown in Fig. \ref{Fig.4}, the estimated mutual information converges to the actual mutual information as the number of training epochs increase. At epoch $250$, the mutual information obtained by MINE is indistinguishable from the true mutual information. 

\begin{table}
	\centering
	\caption{Simulation Parameters}
	\def\arraystretch{1.5}
	\begin{tabular}{ c c c c c c}
		\hline
		Parameter & Symbol & Value \\ 
		\hline
		Learning rate & $\eta$ & 0.0005 \\  
		Neural network layers & $L$ & 3  \\
		Number of learning sessions & epoch & 500 \\  
		Privacy budget & $\epsilon$ & [0.5, 0.75, 1.0, 1.25]\\
		Small batch data & minibatch & 20000  \\
		\hline
		\label{Table.2}
	\end{tabular}
\end{table}

\begin{figure}[t]
	\vspace{-0.7cm}
	\centering
	\includegraphics[width=0.5\textwidth]{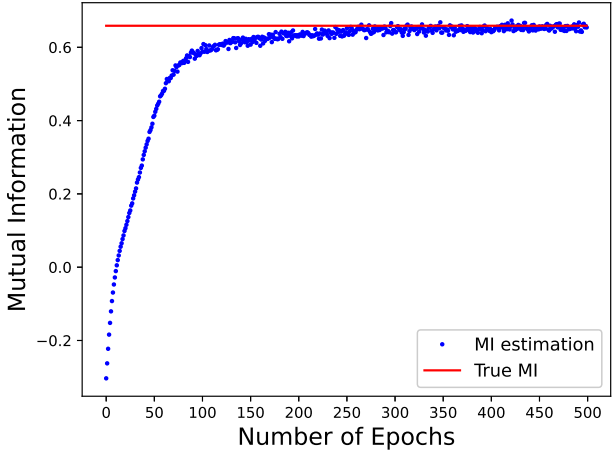}
	\caption{Mutual information estimation and true mutual information.}
	\label{Fig.4}	
\end{figure}

\subsection{Trade-Off Performance for Different Privacy Budgets}
\label{Sec4.2}
The privacy utility trade-off under different privacy budgets is explored in this subsection. Batches of datasets $S$ and $X$ are generated from the same distribution. In addition to this, $Y$ is obtained by adding Gaussian noise to $X$. For mutual information estimation, we use the Adam optimizer and a learning rate of $\eta= 0.0005$. Note that we do not have access to the true joint distribution $p(s,y)$, $p(x,y)$ and the marginal distributions $p(s)$, $p(x)$, $p(y)$, We therefore utilize samples of these distributions and approximated the expectations by the sample average.

Here, the values of privacy budget $\epsilon$ are taken as $0.5$, $0.75$, $1$ and $1.25$, respectively. The maximum mutual information $I(X;Y)$ of different privacy budgets is displayed in Fig. \ref{Fig.5}. As shown in Fig. \ref{Fig.5}, the estimated mutual information value of each epoch gradually grows and eventually converges to near a maximum value, indicating that the threshold of privacy disclosure rises within a specific range. The more sensitive information which is published, the more utility recipients receive, and the more stable the data becomes after training.
The greatest value of mutual information collected in the last epochs for varied privacy budgets. Fig. \ref{Fig.6} shows the comparison results between the various precision of privacy disclosure. It demonstrates that a larger privacy budget would lead to both more privacy disclosure and higher data utility.

\begin{figure}[t]
	\centering
	\includegraphics[width=0.5\textwidth]{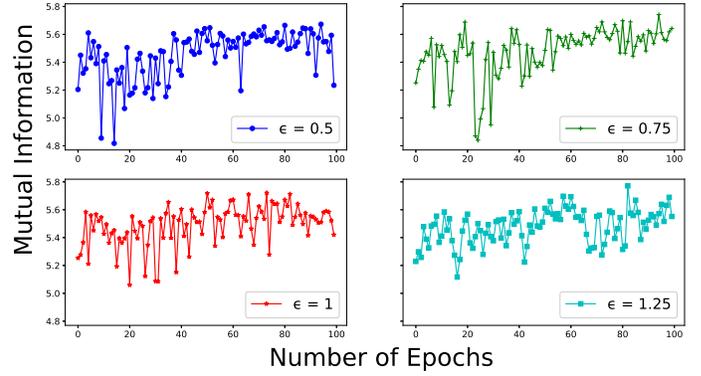}
	\caption{Trade-off performance for different privacy budgets with respect to number of epochs.}
	\label{Fig.5}
\end{figure}

\begin{figure}[t]
	\centering
	\includegraphics[width=0.5\textwidth]{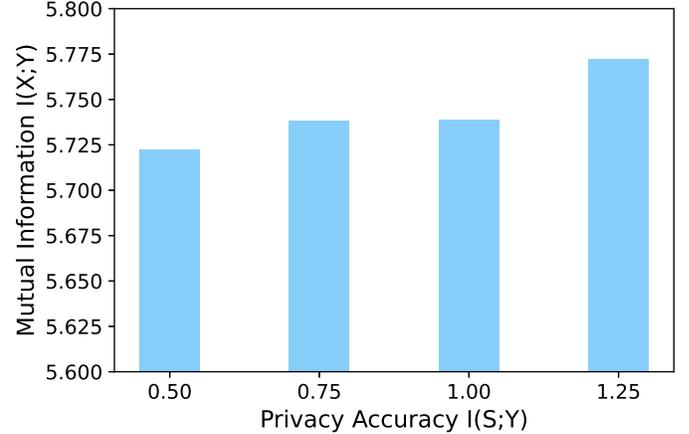}
	\caption{Trade-off performance for privacy budgets $0.5$, $0.75$, $1$ and $1.25$.}
	\label{Fig.6}	
\end{figure}

\subsection{Trade-Off Performance for Different Gaussian and Laplacian Noise Levels}
In this subsection, we consider the influence of adding noise to dataset for different noise levels. We studied privacy utility trade-off using Gaussian noise and Laplacian noise, respectively. In order to assess the influence of different noise levels, we set $\epsilon$ to $0.75$ and the other hyperparameters as in Subsection \ref{Sec4.2}. After that, we choose to add either Gaussian noise or Laplacian noise and perform training for $100$ epochs. We evaluate the privacy utility trade-off at the end of each epoch, and collect the budget of the best utility for different noise levels. As demonstrated in Fig. \ref{Fig.7}, the impact of different noise levels on the privacy utility trade-off varies. It showcases the scenario when the estimated mutual information changes with different noise levels. It shows that the mutual information obtained by adding Gaussian noise is more stable and larger than that obtained by adding Laplacian noise while it tends to converge.

\begin{figure}[t]
	\vspace{-0.52cm}
	\setlength{\abovecaptionskip}{0.4cm}   
	\centering
	\includegraphics[width=0.55\textwidth]{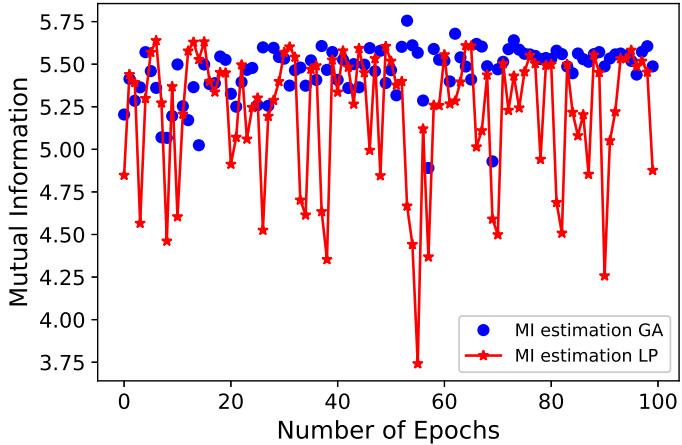}
	\caption{Trade-off performance for different Gaussian (GA) and Laplacian (LP) noise levels.}
	\label{Fig.7}	
\end{figure}

\begin{figure}[t]
	\vspace{-0.55cm}
	\centering
	\includegraphics[width=0.55\textwidth]{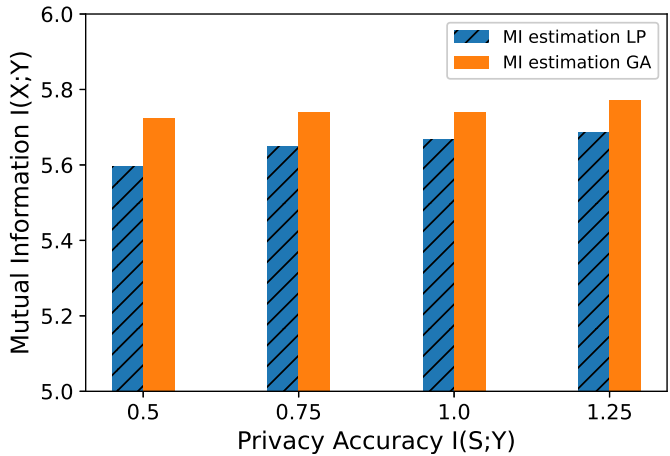}
	\caption{ Maximum utility of adding different noise levels with privacy budgets $0.5$, $0.75$, $1$ and $1.25$.}
	\label{Fig.8}	
\end{figure}

\begin{figure}[t]
	\centering
	\includegraphics[width=0.5\textwidth]{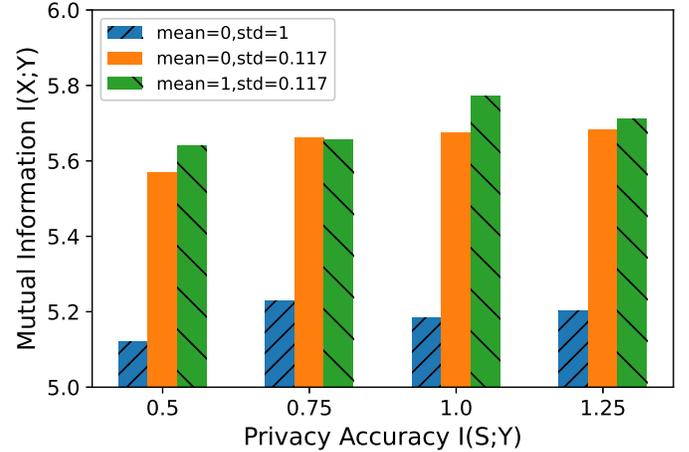}
	\caption{Trade-off performance for different Gaussian parameters with privacy budgets $0.5$, $0.75$, $1$ and $1.25$.}
	\label{Fig.9}	
\end{figure}

Fig. \ref{Fig.8} illustrates the maximum value of mutual information, namely the maximum utility when the privacy budgets are $0.5$, $0.75$, $1.0$ and $1.25$, respectively, by adding Gaussian noise and Laplacian noise to the dataset. It can be clearly observed that when the same privacy budget is taken, the maximum utility obtained by adding Gaussian noise is greater than that obtained by adding Laplacian noise. The maximum data utility rises in tandem with the level of privacy budget.

\subsection{Trade-Off Performance for Different Gaussian Parameters}

In this subsection, we consider the influence of setting different Gaussian parameters for maximum mutual information. The privacy utility trade-off of Gaussian noise with the standard deviation $0.1117$ versus noise generated by standard Gaussian distribution are investigated respectively. The privacy utility trade-off is affected heavily by different noise parameters. The maximum mutual information $I(X;Y)$ as shown in Fig. \ref{Fig.9}.
To evaluate the effects of different Gaussian noise parameters, we set $\epsilon$ to $0.75$ and set the other hyperparameters to the parameters in Subsection \ref{Sec4.2}. Then, we modified the standard deviation in the Gaussian noise from the value $0.1117$ to the deviation of the standard Gaussian distribution, i.e. $1$, and ran the framework for $100$ epochs. We evaluated the privacy utility trade-off at the end of each period, and collect the maximum utility for different parameter settings. It can be seen that different mean values have little influence on maximum utility. However, a smaller standard deviation will lead to greater maximum mutual information, in other words, we can set a small variance to obtain a greater utility.

\section{Conclusion}
\label{Sec.5}
In this paper, we consider the privacy utility trade-off in data publishing, which is very important to IoT and big data applications. Specifically, mutual information was used as a measure of both privacy disclosure and data utility, and MINE is adopted to estimate mutual information for a better accuracy. A privacy utility trade-off framework is proposed by incoperating MINE. Our simple but effective model does not require other training strategies, thus making the model more stable and less sensitive to hyperparameters. Simulation results demonstrated the effectiveness of the framework with regard to different parameter settings.

\appendix[Proof of \eqref{Eq.11}]
For completeness, similar to \cite{Belghazi2018MutualIN}, we prove it as follows.
\begin{proof}
	For a given function T, consider the Gibbs distribution $G(x)=\frac{1}{Z} e^{T(x)}  Q(x)$, where $Z=E_Q [e^{T(x)}]$, then	
	\begin{equation}
		\begin{aligned}
			E_P [T(x)]-\log_2(Z)
			&=E_P [T(x)-\log_2(Z)]\\
			&=E_P \left[\log_2{e^{T(x)}}-\log_2(Z)\right]\\
			&= E_P \left[\log_2 \frac{e^{T(x)}}{Z}\right]\\
			&=E_P\left[\log_2 \frac{e^{T(x)}Q(x)}{ZQ(x)}\right]\\
			&=E_P\left[\log_2 \frac{G(x)}{Q(x)}\right],
		\end{aligned}
	\end{equation}	
	let $\Delta$ be subtract between KL-divergence and above, then
	\begin{equation}
		\begin{aligned}
			\Delta &:= D_{KL}(P||Q)-\left(E_P\left[T(x)\right]-\log_2\left(E_Q \left[e^{T(x)}\right]\right)\right)\\
			&=E_P \left[\log_2 \frac{P(x)}{Q(x)}-\log_2 \frac{G(x)}{Q(x)}\right]\\
			&=E_P\left[\log_2 \frac{P(x)}{G(x)}\right]\\
			&=\int{P(x)\log_2{\frac{P(x)}{Q(x)}}}{\rm d}x\\
			&=D_{KL}(P||G),
		\end{aligned}
	\end{equation}
	because of the non-negative of KL-divergence, $\Delta \ge 0$, so that
	\begin{equation}
		D_{KL}(P||Q) \ge  E_P [T] - \log_2 \left(E_Q[e^T]\right). 
	\end{equation}
Thus, we conclude our proof.
\end{proof}






\ifCLASSOPTIONcaptionsoff
  \newpage
\fi



\bibliographystyle{IEEEtran}
\bibliography{IEEEexample}
\end{document}